%% file: root.tex
\documentclass[letterpaper, 10 pt, conference]{ieeeconf}  

\IEEEoverridecommandlockouts                              
\usepackage{textcomp}
\usepackage{stfloats}
\usepackage{url}
\usepackage{verbatim}
\usepackage{graphicx}
\usepackage[sort,compress,nospace]{cite}
\usepackage{threeparttable}
\usepackage{tablefootnote}

\usepackage{amsthm}
\usepackage[dvipsnames]{xcolor}
\usepackage{amsmath,amsfonts}
\usepackage{algorithmic}
\usepackage{algorithm}
\usepackage{array}
\usepackage{siunitx}
\usepackage{epstopdf}
\usepackage[caption=false, font=footnotesize]{subfig}
\usepackage{svg}
\usepackage{multirow}
\usepackage{booktabs}
\usepackage{comment}
\usepackage{cuted}
\usepackage{lipsum}
\usepackage{txfonts}

\newtheorem{theorem}{Theorem}

\newtheorem{proposition}{Proposition}

\newtheorem{remark}{Remark}

\title{\LARGE \bf
Linear-Quadratic Gaussian Games with Distributed Sparse Estimation
}

\author{Tianyu Qiu$^1$, Filippos Fotiadis$^1$, Xinjie Liu$^1$, Christian Ellis$^{1,2}$, Jesse Milzman$^2$, Wesley Suttle$^2$,\\ Ufuk Topcu$^1$ and David Fridovich-Keil$^1$
\thanks{This work was sponsored by the ARL under Cooperative Agreements W911NF-23-2-0011 and W911NF-25-2-0021, by NSF under grants 2211548, 2336840, and 2409535, by ONR under grant N00014-24-1-2797, and by AFOSR under grant  FA9550-22-1-0403.}
\thanks{T. Qiu, F. Fotiadis, X. Liu, C. Ellis, U. Topcu, and D. Fridovich-Keil are with the Oden Institute for Computational Engineering and Sciences, The University of Texas at Austin, TX, 78712, USA. 
(Email:  {\tt \{tianyuqiu, ffotiadis, xinjie-liu, utopcu, dfk\}@utexas.edu, christian.ellis@austin.utexas.edu}).
}%
\thanks{
C. Ellis, J. Milzman, and W. Suttle are with the DEVCOM Army Research Laboratory, MD, 20783-1138, USA. 
(Email:  {\tt jesse.m.milzman.civ@army.mil, suttlewes@gmail.com}).
}
}

\begin{document}

\maketitle
\thispagestyle{empty}
\pagestyle{empty}

\begin{abstract}
\input{abstract}
\end{abstract}

\section{Introduction}
\input{introduction}


\section{Linear-Quadratic Gaussian Games with Individual Observations}
\input{lqgg}

\section{Distributed Sparse Estimation Gain Design}
\input{sensor_selection}


\section{Numerical Simulations}
\input{experiment}

\section{Conclusions and Future Work}
\input{conclusion}








\bibliographystyle{IEEEtran}
\bibliography{IEEEabrv,reference}

\end{document}

%% file: abstract.tex
Linear-quadratic Gaussian games provide a framework for modeling strategic interactions in multi-agent systems, where agents must estimate system states from noisy observations while also making decisions to optimize a quadratic cost. However, these formulations usually require agents to utilize the full set of available observations when forming their state estimates, which can be unrealistic in large-scale or resource-constrained settings. In this paper, we consider linear-quadratic Gaussian games with sparse interagent observations. To enforce sparsity in the estimation stage, we design a distributed estimator that balances estimation effectiveness with interagent measurement sparsity via a group lasso problem, while agents implement feedback Nash strategies based on their state estimates. We provide sufficient conditions under which the sparse estimator is guaranteed to trigger a corrective reset to the optimal estimation gain, ensuring that estimation quality does not degrade beyond a level determined by the regularization parameters. Simulations on a formation game show that the proposed approach yields a significant reduction in communication resources consumed while only minimally affecting the nominal equilibrium trajectories.

%% file: introduction.tex
\label{sec:introduction}

Dynamic games are widely used to model strategic interactions in multi-agent systems, where agents are assumed to share a common representation of state uncertainty.
In many applications, however, agents operate under severe resource constraints. For instance, large-scale robot swarms may rely on limited onboard computation and energy resources, making it impractical to process or exchange all available measurements. Sparsity in sensing design thus becomes essential
for efficient resource allocation. 

Recent works \cite{chahine2023local, liu2024policies, qiu2025psn} 
have explored sparsity in deterministic games.
These works propose sparsity-promoting game formulations to selectively prune less influential interactions while preserving effective overall agent performance.
However, deterministic games simplify common multi-agent scenarios by ignoring process disturbances and noise, which are inevitable in real-world applications. 

Classically, linear-quadratic Gaussian (LQG) games \cite{shapley1953stochastic}  extend dynamic game theory to stochastic settings, where agents interact under process and observation uncertainty. LQG games have been studied extensively and are fundamentally more complex than deterministic LQ games studied in \cite{liu2024policies, SmoothGameTheory} due to the coupling between control and estimation. Specifically, the separation principle~\cite{welch1995introduction,van2011lqg,bertsekas2012dynamic} does not generally hold for these games, and equilibrium strategies depend on the underlying information pattern. As a result, much of the literature on LQG games focuses on characterizing equilibrium solutions under various information structures \cite{pachter2013static,pachter2016linear,gupta2014common,gupta2016dynamic,vasal2016signaling,heydaribeni2019linear,hambly2023linear}. However, these works have not investigated how sparsity in estimation can be incorporated in LQG games.

\begin{figure}[!t]
    \centering
    \includegraphics[width=0.99\linewidth]{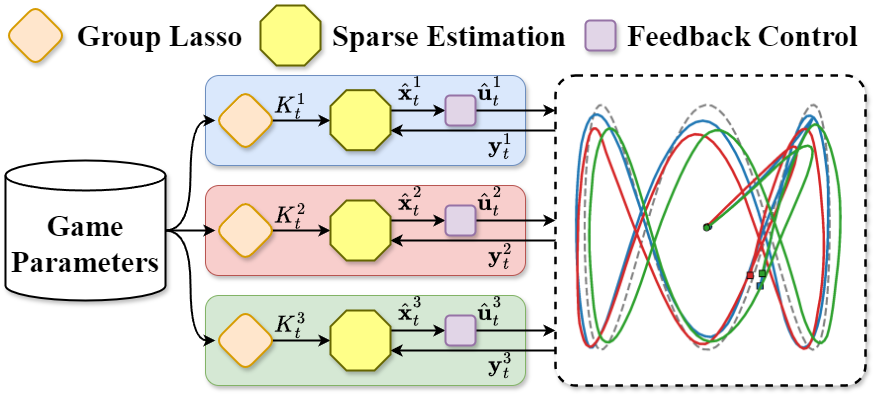}
    \caption{Feedback LQG game play with distributed sparse estimation in a three-robot formation game.}
    \label{fig:framework}
    \vspace{-4ex}
\end{figure}

A natural way to introduce sparsity in LQG games is through their estimation stage. In particular, sparsity in estimation can be interpreted as a sensor selection problem, where agents selectively utilize certain observation components when forming their state estimates. Existing work on sensor selection and sparsity-promoting estimation has primarily focused on single-agent or centralized systems \cite{tzoumas2020lqg,fotiadis2024input,fotiadis2026input}. Related work on distributed estimation also exists \cite{zhong2024sparse}. However, the distributed estimation problem for multi-agent dynamic games brings two difficulties: \romannumeral1) agents interact strategically, and \romannumeral2) agents are subject to distinct measurements, making classical sensor selection approaches inapplicable. Consequently, distributed sparse estimation design in stochastic games remains largely unexplored. 

In this work, we study how sparsity can be incorporated into LQG games with individual observations. Towards this end, our contributions are as follows: 
\romannumeral1) We first design a distributed state estimator for LQG games that operates using each agent's local measurements while agents follow feedback Nash equilibrium strategies \cite{SmoothGameTheory, liu2024policies}. 
\romannumeral2) Building on this estimator, we introduce a sparsity-promoting distributed estimation framework for LQG games based on group lasso, which promotes structured sparsity across interagent observation groups. 
\romannumeral3) We provide sufficient conditions showing that if the estimation error covariance grows beyond a level determined by the regularization parameters, the sparse estimator is guaranteed to reset to the optimal estimation gain, thereby preventing unbounded degradation of estimation quality.
\romannumeral4) Finally, we validate the proposed framework through simulations on a three-robot formation game, as illustrated in Fig. \ref{fig:framework}, demonstrating the trade-off between estimation sparsity and trajectory performance in noisy multi-agent environments.

%% file: lqgg.tex
\label{sec:lqgg_ie}
\subsection{Linear-Quadratic Gaussian Games}
We consider an $N$-player discrete-time finite-horizon linear-quadratic Gaussian (LQG) game.  In this setting, each agent ${i\in[N]\coloneqq\{1,\dots,N\}}$\footnote{In this work, we denote $[Z]\coloneqq\{1,\dots,Z\}$ for any positive integer $Z$.} is subject to the linear dynamics
\begin{equation*}
        x_{t+1}^i\coloneqq A_t^ix_t^i+B_t^iu_t^i+w_t^i,
\end{equation*}
with state ${x_t^i\in\mathbb{R}^{n_i}}$, control input ${u_t^i\in\mathbb{R}^{m_i}}$, and system matrices ${A_t^i\in\mathbb{R}^{n_i\times n_i}}$ and ${B_t^i\in\mathbb{R}^{n_i\times m_i}}$. ${w_t^i\in\mathbb{R}^{n_i},w_t^i\sim\mathcal{N}(0,W_t^i)}$ is Gaussian process noise with covariance $W_t^i\in\mathbb{R}^{n_i\times n_i}\succeq0$. We further denote the joint state of all agents as ${\textbf{x}_t\coloneqq[x_t^{1\top},\dots,x_t^{N\top}]^\top\in\mathbb{R}^n}$ and the joint control input of all agents as ${\textbf{u}_t\coloneqq[u_t^{1\top},\dots,u_t^{N\top}]^\top\in\mathbb{R}^m}$ with joint state dimension ${n\coloneqq\sum_{i=1}^Nn_i}$ and joint control dimension ${m\coloneqq\sum_{i=1}^Nm_i}$.

The objective of each agent $i$ is to minimize a quadratic performance criterion over a finite time horizon $T$, subject to the joint dynamics of all agents. Agent $i$'s problem can be formulated as
\begin{subequations}
    \begin{align}
        \min_{\textbf{u}^i}\ &\sum_{t=1}^T\frac{1}{2}\mathbb{E}\left[(\textbf{x}_t^\top Q_t^i+2q_t^{i\top})\textbf{x}_t+(\textbf{u}_t^{\top}R_t^i+2r_t^{i\top})\textbf{u}_t\right]\notag\\
        &+\frac{1}{2}\mathbb{E}\left[(\textbf{x}_{T+1}^\top Q_{T+1}^i+2q_{T+1}^{i\top})\textbf{x}_{T+1}\right],\label{eqn:lqgg_cost}\\
        \text{s.t.}\  \textbf{x}_{t+1}&=A_t\textbf{x}_t+B_t\textbf{u}_t+\textbf{w}_t,\label{eqn:lqgg_dyn}\\
        \textbf{x}_1&\sim\mathcal{N}(\Bar{\textbf{x}}_1,\Sigma_{\textbf{x},1}^-),\label{eqn:lqgg_initial_cond}
    \end{align}
    \label{eqn:DTLQGG}%
\end{subequations}
with $A_t\coloneqq\text{blockdiag}(\{A_t^i\}_{i=1}^N),B_t\coloneqq\text{blockdiag}(\{B_t^i\}_{i=1}^N)$ and joint process noise $\textbf{w}_t\coloneqq[w_t^{1\top},\dots,w_t^{N\top}]^\top\in\mathbb{R}^n$, following
\begin{equation*}
    \begin{aligned}
        &\textbf{w}_t\sim\mathcal{N}(\textbf{0},W_t),\ W_t\coloneqq\text{blockdiag}(\{W_t^i\}_{i=1}^N)\in\mathbb{R}^{n\times n}.
    \end{aligned}
\end{equation*}
The joint initial state $\textbf{x}_1$ follows a prior Gaussian distribution with mean ${\Bar{\textbf{x}}_1}$ and covariance ${\Sigma_{\textbf{x},1}^-\succeq0}$.
The $N$ agents' coupled versions of problem \eqref{eqn:DTLQGG} form an LQG game.

\subsection{Individual Observation Model}
In LQG games, each agent $i$ usually does not have access to all agents' full states $\textbf{x}_t$. Instead, it has access to some partial, noisy individual observations ${\textbf{y}_t^i\in\mathbb{R}^{p_i}}$, given by
\begin{equation}
        \textbf{y}_t^i\coloneqq C_t^i\textbf{x}_t+v_t^i,\  v_t^i\in\mathbb{R}^{p_i},v_t^i\sim\mathcal{N}(0,V_t^i),\ V_t^i\in\mathbb{R}^{p_i\times p_i}\succeq0,\label{eqn:ind_obs}
\end{equation}
where ${C_t^i\in\mathbb{R}^{p_i\times n}}$ denotes the individual observation matrix of agent $i$, and {$v_t^i$} denotes the individual Gaussian observation noise. 
We also denote $\textbf{v}_t\coloneqq[v_t^{1\top},\dots,v_t^{N\top}]^\top\in\mathbb{R}^p$, following
\begin{equation*}
    \begin{aligned}
        &\textbf{v}_t\sim\mathcal{N}(\textbf{0},V_t),\ V_t\coloneqq\text{blockdiag}(\{V_t^i\}_{i=1}^N)\in\mathbb{R}^{p\times p},\  p=\sum_{i=1}^Np_i.
    \end{aligned}
\end{equation*}
In this paper, we assume that $\textbf{y}_t^i$ can be obtained from $P_i$ individual sensors and thus partitioned into $\textbf{y}_t^{i1},\dots,\textbf{y}_t^{iP_i}$. It is also common that $C_t^i=I$, implying that $\textbf{y}_t^i$ may comprise interagent observations.

\subsection{Control, Estimation, and Information Patterns}
There is generally no closed-form solution for the feedback Nash equilibrium in LQG games with individual observations \cite{witsenhausen1968counterexample}. A common low-complexity approximation assumes that each agent $i$ employs the Nash control gains $\Gamma_t\coloneqq[\Gamma_t^1,\cdots,\Gamma_t^N]^\top$ and $\alpha_t\coloneqq[\alpha_t^{1\top},\cdots,\alpha_t^{N\top}]^\top$ obtained from the LQG game with process noise but without measurement noise \cite{liu2024policies, SmoothGameTheory}. Each agent then estimates the joint state $\hat{\textbf{x}}_t^i$ from its individual observations $\textbf{y}_t^i$ with estimation gain $K_t^i$, and substitutes this estimate into its control input $u_t^i$, resulting in:
\begin{equation}
    u_t^i=-\Gamma_t^i\hat{\textbf{x}}_t^i-\alpha_t^i,\label{eqn:actual_single_ctrl}
\end{equation}
whose performance relies on the accuracy of $\hat{\textbf{x}}_t^i$. To characterize this accuracy, we define agent $i$'s individual estimation error $\textbf{e}_t^i$ and the joint estimation error across agents $\textbf{e}_t$ as
\begin{equation*}
    \textbf{e}_t^i\coloneqq\textbf{x}_t-\hat{\textbf{x}}_t^i,\ \textbf{e}_t\coloneqq[\textbf{e}_t^{1\top},\cdots,\textbf{e}_t^{N\top}]^\top,
\end{equation*}
and their covariances as $\Sigma_t^i\coloneqq \text{Cov}(\textbf{e}_t^i)$, $\Sigma_t\coloneqq \text{Cov}(\textbf{e}_t)$. 

The central question we investigate in this paper is how to design these gains $K_t^i$ to enforce interagent measurement sparsity, while balancing that against control performance. To that end, we assume the following information pattern:
\begin{enumerate}
    \item The system matrices $A_t,B_t$, observation model $C_t^i$, process and measurement noise covariance $W_t, V_t$, control gains $\Gamma_t,\alpha_t$, and distributed estimation gain $K_t^i$ are common knowledge among agents. 
    \item The individual observations $\textbf{y}_t^i$, together with individual estimation $\hat{\textbf{x}}_t^i$ computed via $\textbf{y}_t^i$, and feedback control $\textbf{u}_t^i$ computed via $\hat{\textbf{x}}_t^i$ are \textbf{not} shared among agents.
\end{enumerate}
This information pattern captures how agents' decisions depend upon their limited access to information, particularly when agents' sensing of one another is \textit{\textbf{sparse}}.

%% file: sensor_selection.tex
\label{sec:obs_model_selection}

In this section, we design sparse estimation gains $K_t^i$ for each agent, subject to the aforementioned information pattern. We first derive the propagation of estimation error and covariance under the distributed information structure, then obtain the unconstrained optimal estimation gain as a reference solution, and finally formulate a group lasso optimization that sparsifies this gain by penalizing reliance on uninformative sensors.

\subsection{Propagation of Joint Estimation Error and Covariance}
\label{sec:err_cov_propgation}
\textbf{Joint Control Input:} A consequence of the assumed information pattern is that each agent $i$ cannot reconstruct the actual joint control $\textbf{u}_t$ from \eqref{eqn:actual_single_ctrl} applied to the joint system, because agents do not share their state estimates. Instead, each agent can only reconstruct an \textit{estimated} $\hat{\textbf{u}}_t^i$ based on its estimation of the joint state $\hat{\textbf{x}}_t^i$. This estimate is given by
\begin{equation}
    \hat{\textbf{u}}_t^i=\begin{bmatrix}
            \hat{u}_t^{i1^\top}&\cdots&\hat{u}_t^{iN^\top}
        \end{bmatrix}^\top=-\Gamma_t\hat{\textbf{x}}_t^i-\alpha_t=-\Gamma_t(\textbf{x}_t-\textbf{e}_t^i)-\alpha_t,
    \label{eqn:estimated_control}%
\end{equation}
where agent $i$'s actual control $u_t^i$  is the $i$-th block of $\hat{\textbf{u}}_t^i$, i.e., $u_t^i=\hat{u}_t^{ii}$. Then, the actual joint control $\textbf{u}_t$ applied to the system is the concatenation of all agents' actual control $u_t^i$, given as
\begin{equation}
    \begin{aligned}
        \textbf{u}_t&=\begin{bmatrix}
            \hat{u}_t^{11^\top}&\cdots&\hat{u}_t^{NN^\top}
        \end{bmatrix}^\top=\sum_{j=1}^NE^j\textbf{u}_t^j\\
        &=-\alpha_t-\sum_{j=1}^NE^j\Gamma_t(\textbf{x}_t-\textbf{e}_t^j)=-\alpha_t-\Gamma_t\textbf{x}_t+\sum_{j=1}^NE^j\Gamma_t\textbf{e}_t^j,
    \end{aligned}
    \label{eqn:actual_control}
\end{equation}
with $E^i\in\mathbb{R}^{m\times m}$ being a block diagonal matrix with its $(i,i)$-th block being $I_{m_i}$, and the other blocks being zero matrices. 

\textbf{Joint State Propagation:} Given the joint control input \eqref{eqn:actual_control},  the joint state $\textbf{x}_{t+1}$ evolves according to
\begin{equation}
    \begin{aligned}
        \textbf{x}_{t+1}&=A_t\textbf{x}_t+B_t\textbf{u}_t+\textbf{w}_t\\
        &=(A_t-B_t\Gamma_t)\textbf{x}_t+\sum_{j=1}^NB_tE^j\Gamma_t\textbf{e}_t^j-B_t\alpha_t+\textbf{w}_t.
    \end{aligned}
    \label{eqn:x_update}
\end{equation}
\textbf{Linear Estimation Propagation:} 
Each agent $i$ then updates the estimate $\hat{\textbf{x}}_{t+1}^i$ according to its individual observations $\textbf{y}_{t+1}^i$ and the estimate of control $\hat{\textbf{u}}_t^i$ via
\begin{subequations}
    \begin{align}
        \hat{\textbf{x}}_{t+1}^{i-}&=A_t\hat{\textbf{x}}_t^i+B_t\hat{\textbf{u}}_t^i,\label{eqn:prior_x_update}\\
        \hat{\textbf{x}}_{t+1}^i&=\hat{\textbf{x}}_{t+1}^{i-}+K_{t+1}^i\left(\textbf{y}_{t+1}^i-C_{t+1}^i\hat{\textbf{x}}_{t+1}^{i-}\right),\label{eqn:post_x_update}
    \end{align}
    \label{eqn:est_x_update}%
\end{subequations}
where $K_t^i$ is to be designed in the following subsections. Moreover, subtracting \eqref{eqn:prior_x_update} and \eqref{eqn:post_x_update} from \eqref{eqn:x_update}, we obtain each agent $i$'s individual estimation error $\textbf{e}_{t+1}^i$ as
\begin{subequations}
    \begin{align}
        \textbf{e}_{t+1}^{i-}&=\textbf{x}_{t+1}-\hat{\textbf{x}}_{t+1}^{i-}=A_t(\textbf{x}_t-\hat{\textbf{x}}_t^i)+B_t(\textbf{u}_t-\hat{\textbf{u}}_t^i)+\textbf{w}_t\notag\\
        &=A_t\textbf{e}_t^i+B_t\left(\sum_{j=1}^NE^j\Gamma_t\textbf{e}_t^j-\Gamma_t\textbf{e}_t^i\right)+\textbf{w}_t,\label{eqn:ind_err_prior_update}\\
        \textbf{e}_{t+1}^i&=\textbf{x}_{t+1}-\hat{\textbf{x}}_{t+1}^i=\textbf{e}_{t+1}^{i-}-K_{t+1}^i\left(C_{t+1}^i\textbf{e}_{t+1}^{i-}+v_{t+1}^i\right).\label{eqn:ind_err_post_update}
    \end{align}
    \label{eqn:est_update}%
\end{subequations}
\vspace{-4ex}
\begin{remark}
A key observation from \eqref{eqn:est_update} is that, unlike classical state estimation where error depends only on system dynamics and noise, the error in LQG games with individual observations is also driven by the mismatch between agent $i$'s estimate of the joint control $\hat{\textbf{u}}_t^i$ and the actual $\textbf{u}_t$, and consequently leads to a joint update for the estimation error.
\end{remark}

\textbf{Covariance Update:} Finally, concatenating \eqref{eqn:ind_err_prior_update} and \eqref{eqn:ind_err_post_update}, the joint estimation error $\textbf{e}_{t+1}$ and covariance $\Sigma_{t+1}$ evolve as
\begin{subequations}
    \begin{align}
        \textbf{e}_{t+1}^-&=(\mathcal{A}_t+\mathcal{B}_t)\textbf{e}_t+\textbf{w}_t,\\
        \textbf{e}_{t+1}&=(I_{n^2}-\mathcal{K}_{t+1})\textbf{e}_{t+1}^--\mathbb{K}_{t+1}\textbf{v}_{t+1},\\
        \Sigma_{t+1}^-&=(\mathcal{A}_t+\mathcal{B}_t)\Sigma_t(\mathcal{A}_t+\mathcal{B}_t)^\top+W_t\label{eqn:prior_cov_update},\\
        \Sigma_{t+1}&=(I_{n^2}-\mathcal{K}_{t+1})\Sigma_{t+1}^-(I_{n^2}-\mathcal{K}_{t+1})^\top+\mathbb{K}_{t+1}V_{t+1}\mathbb{K}_{t+1}^\top,\label{eqn:post_cov_update}
    \end{align}
\end{subequations}
\begin{equation*}
    \begin{aligned}
        &\mathcal{A}_t=\text{blockdiag}(\{A_t-B_t\Gamma_t\}_{i=1}^N),\ 
        \mathcal{B}_t=\begin{bmatrix}
                B_tE^1\Gamma_t & \cdots & B_tE^N\Gamma_t\\
                \vdots & \ddots & \vdots \\
                B_tE^1\Gamma_t & \cdots & B_tE^N\Gamma_t
            \end{bmatrix},\\
        &\mathcal{K}_{t+1}=\text{blockdiag}(\{K_{t+1}^iC_{t+1}^i\}_{i=1}^N),\
        \mathbb{K}_{t+1}=\text{blockdiag}(\{K_{t+1}^i\}_{i=1}^N).
    \end{aligned}
    \label{eqn:augmented_system_matrices}
\end{equation*}
This relation explicitly shows how $\textbf{e}_{t+1}$ and $\Sigma_{t+1}$ are affected by the distributed estimation gains, which we will design sparsely in the following subsections.

\subsection{Distributed Optimal Estimation Design}
\label{sec:distributed_est_gain}

We now derive the unconstrained optimal estimation gain for each agent, which will serve as the reference solution for the sparse design in Section~\ref{sec:sparse_est_group_lasso}. Specifically, we minimize agent $i$'s individual estimation error covariance $\Sigma_t^i$ with respect to its estimation gain $K_t^i$:
\begin{subequations}
    \begin{align}
        \Sigma_t^{i-}&=F^i\Sigma_t^-F^{i\top},\\
        \Sigma_t^i&=(I-K_t^iC_t^i)\Sigma_t^{i-}(I-K_t^iC_t^i)^\top+K_t^iV_t^iK_t^{i\top},\label{eqn:joseph}
    \end{align}
\end{subequations}
where $F^i\coloneqq[\textbf{0},\cdots,\textbf{0},I_n,\textbf{0},\cdots,\textbf{0}]$
with its $i^\text{th}$ block being $I_n$. Note that \eqref{eqn:joseph} naturally holds since $\mathcal{K}_t$ and $\mathbb{K}_t$ are block diagonal matrices.
By letting $\frac{\partial \textbf{tr}(\Sigma_t^i)}{\partial K_t^i}=\textbf{0}$, the stepwise optimal individual estimation gain is expressed as follows:
\begin{equation}
\begin{aligned}
    &K_t^{i\star}(C_t^i\Sigma_t^{i-}C_t^{i\top}+V_t^i)=\Sigma_t^{i-}C_t^i,\\
    &K_t^{i\star}=\Sigma_t^{i-}C_t^i(C_t^i\Sigma_t^{i-}C_t^{i\top}+V_t^i)^{-1}.
\end{aligned}
    \label{eqn:alternating_kalman}
\end{equation}
Note that each agent's estimation gain can be optimized independently. However, this gain generally processes all sensor observations, motivating a sparse relaxation.

\subsection{Sparse Estimation via Group Lasso Optimization}
\label{sec:sparse_est_group_lasso}

We therefore seek a sparse approximation to $K_{t+1}^{i\star}$ that zeroes out columns corresponding to uninformative sensors, allowing each agent to rely on fewer observations while preserving estimation accuracy. Mirroring the partition of $\textbf{y}_t^i$ into $P_i$ sensor groups, we partition $K_t^i$ into $P_i$ block columns as $K_t^i=[K_t^i[1],\cdots,K_t^i[P_i]]$, where $K_t^i[\rho]$ is the estimation gain associated with sensor $\rho$. Subsequently, we formulate the following group lasso problem for all $i\in[N],\ t\in[T+1]$:
\begin{equation}
        \hspace{-1mm}\min_{K_t^i}\ \underbrace{\left\|K_t^i(C_t^i\Sigma_t^{i-}C_t^{i\top}+V_t^i)-\Sigma_t^{i-}C_t^i\right\|_F^2}_{L_{\text{Estimation}}}+\underbrace{\sum_{\rho\in[P_i]}\lambda_t^{i\rho}\left\|K_t^i[\rho]\right\|_F}_{L_{\text{Sparsity}}},
    \label{eqn:lasso_sparse_estimation}
\end{equation}
where $\|\cdot\|_F$ is the matrix Frobenius norm. In this optimization problem, $L_{\text{Estimation}}$ denotes a least square error cost that encourages the stepwise estimation gain $K_t^i$ to be closer to the optimal individual estimation gain in \eqref{eqn:alternating_kalman}. On the other hand, $L_{\text{Sparsity}}$ penalizes large norms of $K_t^i[\rho]$, with non-negative regularization constant $\lambda_t^{i\rho}$, encouraging agent $i$ not to rely on observation $\textbf{y}_t^{i\rho}$, i.e., the measurement from the $\rho$-th sensor. 

In what follows, we show that \eqref{eqn:lasso_sparse_estimation} can be cast as a convex conic program, solvable by off-the-shelf optimizers \cite{Clarabel_2024}.
\begin{proposition}
Problem \eqref{eqn:lasso_sparse_estimation} is equivalent to a convex conic program with quadratic cost.
\end{proposition}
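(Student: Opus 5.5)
We rewrite \eqref{eqn:lasso_sparse_estimation} in epigraph form, introducing one auxiliary scalar per sensor group. This turns each Frobenius-norm penalty into a second-order cone constraint and leaves a convex quadratic objective. Throughout, fix $i$ and $t$ and write $M\coloneqq C_t^i\Sigma_t^{i-}C_t^{i\top}+V_t^i$ and $G\coloneqq\Sigma_t^{i-}C_t^{i\top}$. Then $L_{\text{Estimation}}=\|K_t^iM-G\|_F^2$.

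\textbf{Vectorization.} Let $k\coloneqq\mathrm{vec}(K_t^i)$. By the identity $\mathrm{vec}(K M)=(M^\top\otimes I_n)\,\mathrm{vec}(K)$ we have
\begin{equation*}
L_{\text{Estimation}}=\|(M^\top\otimes I_n)k-\mathrm{vec}(G)\|_2^2=k^\top H k-2h^\top k+\|G\|_F^2,
\end{equation*}
with $H\coloneqq(M\otimes I_n)(M^\top\otimes I_n)=(MM^\top)\otimes I_n\succeq0$ and $h\coloneqq(M\otimes I_n)\mathrm{vec}(G)$. Since $H\succeq 0$, this term is a convex quadratic in $k$. Because $V_t^i\succeq0$ and $\Sigma_t^{i-}\succeq0$, $M$ is symmetric positive semidefinite. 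Convexity does not require invertibility of $M$, so no assumption beyond this is needed.

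\textbf{Group structure.} The block columns $K_t^i[\rho]$ correspond to disjoint, contiguous index sets of $k$. Let $S_\rho$ be the selector matrix with $S_\rho k=\mathrm{vec}(K_t^i[\rho])$. Then $\|K_t^i[\rho]\|_F=\|S_\rho k\|_2$.

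\textbf{Conic reformulation.} Introduce scalars $s_\rho$, $\rho\in[P_i]$, and consider
\begin{equation*}
\min_{k,\,\{s_\rho\}}\ k^\top Hk-2h^\top k+\sum_{\rho\in[P_i]}\lambda_t^{i\rho}s_\rho\quad\text{s.t.}\quad (s_\rho,S_\rho k)\in\mathcal{Q}^{1+d_\rho},\ \rho\in[P_i],
\end{equation*}
where $\mathcal{Q}^{1+d}\coloneqq\{(s,z):\|z\|_2\le s\}$ is the second-order cone and $d_\rho$ is the number of entries of $K_t^i[\rho]$.

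The objective is a convex quadratic, and the constraints are membership of affine images of the variables in second-order cones. This is exactly a convex conic program with quadratic cost, the form accepted by solvers such as \cite{Clarabel_2024}. If a linear objective is preferred, the quadratic term can also be lifted to a rotated second-order cone.

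\textbf{Equivalence.} It remains to show that this program has the same optimal value and the same optimal $K_t^i$ as \eqref{eqn:lasso_sparse_estimation}.

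First, let $K$ be any point of \eqref{eqn:lasso_sparse_estimation}. Setting $s_\rho=\|K[\rho]\|_F$ gives a feasible point of the conic program with equal cost, up to the constant $\|G\|_F^2$, which does not affect minimizers.

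Conversely, let $(k,s)$ be any feasible point. Then $\|S_\rho k\|_2\le s_\rho$ and $\lambda_t^{i\rho}\ge0$, so the conic cost is at least the group lasso cost of $k$. For $\rho$ with $\lambda_t^{i\rho}>0$, any optimum therefore has $s_\rho=\|S_\rho k\|_2$. For $\rho$ with $\lambda_t^{i\rho}=0$, the slack $s_\rho$ is irrelevant to the cost.

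Hence the optimal values coincide and the $k$-components of the minimizers are identical.

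The main point needing care is this tightness argument in the presence of $\lambda_t^{i\rho}=0$, together with noting that $H$ may be only semidefinite. Neither affects convexity or the equivalence.
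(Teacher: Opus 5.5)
Your proof is correct and takes the same route as the paper. You vectorize $L_{\text{Estimation}}$ with the Kronecker identity, then introduce one slack $s_\rho$ per sensor group so that each $\|K_t^i[\rho]\|_F$ becomes a second-order cone constraint under a convex quadratic objective; your explicit tightness argument (covering $\lambda_t^{i\rho}=0$ and a merely semidefinite $H$) supplies the equivalence of optimal values and minimizers that the paper only asserts.
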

\begin{proof}
Using matrix vectorization \cite{macedo2013typing}, the group lasso problem \eqref{eqn:lasso_sparse_estimation} can be reconstructed as
\begin{multline}
        \min_{K_t^i}\left\|\left((C_t^i\Sigma_t^{i-}C_t^{i\top}+V_t^i)\otimes I\right)\text{vec}(K_t^i)-\text{vec}(\Sigma_t^{i-}C_t^i)\right\|_2^2\\
        +\sum_{\rho\in[P_i]}\lambda_t^{i\rho}\|\text{vec}(K_t^i[\rho])\|_2,\label{eqn:vec_group_lasso}
\end{multline}
where $\otimes$ denotes the Kronecker product. Moreover, introducing the slack variables $s_t^{i\rho}$, it follows that the optimization \eqref{eqn:vec_group_lasso} has the same optimal solution as 
\begin{align*}
        \min_{K_t^i, s_t^i}&\left\|\left((C_t^i\Sigma_t^{i-}C_t^{i\top}{+}V_t^i)\otimes I\right)\text{vec}(K_t^i){-}\text{vec}(\Sigma_t^{i-}C_t^i)\right\|_2^2{+}\sum_{\rho\in[P_i]}\lambda_t^{i\rho}s_t^{i\rho},\label{eqn:second-order_cone_optimization}\\
        \text{s.t.}\ &\begin{bmatrix}
            \text{vec}(K_t^i[\rho]) \\
            s_t^{i\rho}
        \end{bmatrix}\in\mathfrak{C},\ \mathfrak{C}\coloneqq\left\{(\mathfrak{p},\mathfrak{q})\left|\ \mathfrak{q}\in\mathbb{R}, \|\mathfrak{p}\|_2\leq \mathfrak{q}\right\}\right.,
\end{align*}
which is a convex conic program with a second-order cone constraint by definition \cite{boyd2004convex}. 
\end{proof}
Finally, note that \eqref{eqn:lasso_sparse_estimation} continuously optimizes the linear estimation gain matrices $K_t^i$ for all players and time steps. In practice, however, each agent $i$'s ultimate goal is to have a binary decision about whether to utilize sensor $\rho$ to obtain the corresponding measurement $\textbf{y}_t^{i\rho}$. If $\textbf{y}_t^{i\rho}$ is not used in constructing $\hat{\textbf{x}}_t^i$, it is equivalent to \textbf{zeroing out} $K_t^i[\rho]$, the $\rho$-th block column of the linear estimation gain $K_t^i$.
In that way, we introduce a ratio threshold $r_{\text{th}}\in(0,1]$ to zero out $K_t^i[\rho]$ if its norm $\|K_t^i[\rho]\|_F$ is  small enough compared to the optimal dense estimation gain $K_t^{i\star}[\rho]$ in \eqref{eqn:alternating_kalman}, i.e.,
\begin{equation}
K_t^i[\rho] =
\begin{cases}
K_t^{i\star}[\rho], & \|K_t^i[\rho]\|_F\geq r_{\text{th}}\|K_t^{i\star}[\rho]\|_F,\quad [\textbf{reset}]\\
\mathbf{0}, & \text{otherwise}.\qquad\qquad\quad\  [\textbf{zero-out}]
\end{cases}
\label{eqn:reset_rule}
\end{equation}
The resulting procedure is summarized in Algorithm~\ref{alg:LQG_game_play}. 
\input{algorithms/lqg_game_play}
  
\subsection{Game-Theoretic Control-Adaptive Regularization}

For the remainder of this section, we specialize to the interagent measurement case, where each sensor directly measures a specific agent's state ($C_t^i = I$). That is, each measurement $\textbf{y}_t^{ij}$ represents agent $i$'s noisy observation of agent $j$. In this setting, the sparse estimation framework directly determines \textit{which agents} each agent observes, making the connection between estimation sparsity and interagent communication explicit.

Since the sparsity decision now corresponds to which agents to observe, the regularization $\lambda_t^{ij}$ should reflect how much agent $i$'s control depends on agent $j$'s state. This dependency is captured by $\Gamma_t^i[j]$, the $j$-th block of the feedback gain. We thus propose the following control-adaptive regularization:
\begin{equation}
\lambda_{t+1}^{ij} =
\begin{cases}
\frac{L_1}{\|\Gamma_t^i[j]\|_F}, & \|\Gamma_t^i[j]\|_F\neq 0,\\
L_2, & \text{otherwise},
\end{cases}
\label{eqn:adaptive_lambda}
\end{equation}
where $\Gamma_t^i[j]$ is the $j$-th block of $\Gamma_t^i$, and $L_1, L_2 > 0$ are tuning parameters. When agent $i$'s control depends strongly on agent $j$ (large $\|\Gamma_t^i[j]\|_F$), the regularization is low and the observation is (more likely) retained; when the dependency is weak or absent, $\lambda_t^{ij}$ defaults to $L_2$, promoting sparsification.

\subsection{Covariance-Sparsity Trade-Off}

A natural question is how much estimation accuracy is sacrificed by sparsifying the estimation gain. We address this by providing a sufficient condition on the regularization level under which the reset mechanism is guaranteed to trigger. This condition becomes easier to satisfy as the estimation error covariance grows, implying that the system self-corrects before estimation quality degrades too far.

\begin{theorem}
    In the interagent measurement case, at time $t$, $K_t^i[\rho]$ will reset to $K_t^{i\star}[\rho]$ for $\rho\in[P_i]$ if 
    \begin{equation*}
        \max_{\rho\in[P_i]}\lambda_t^{i\rho}\leq \frac{2(1-r_{\text{th}})}{\sqrt{P_i}}\cdot\frac{\sigma_\text{min}(\Sigma_t^{i-})[\sigma_\text{min}(\Sigma_t^{i-})+\sigma_\text{min}(V_t^i)]^2}{\sigma_\text{max}(\Sigma_t^{i-})+\sigma_\text{max}(V_t^i)},
    \end{equation*}
    where $\sigma_\text{min}(\cdot)$ and $\sigma_\text{max}(\cdot)$ denote the minimum and maximum singular value for a matrix, respectively.
\end{theorem}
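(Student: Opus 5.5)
We analyze the first-order optimality condition of the group lasso problem \eqref{eqn:lasso_sparse_estimation} and bound how far its minimizer can be from the dense gain $K_t^{i\star}$. In the interagent case $C_t^i=I$, so $S\coloneqq \Sigma_t^{i-}+V_t^i$ and $K_t^{i\star}=\Sigma_t^{i-}S^{-1}$. The least-squares term can then be written as $\|(K_t^i-K_t^{i\star})S\|_F^2$. Let $\hat K$ denote the lasso minimizer. Stationarity gives
\begin{equation*}
2(\hat K-K_t^{i\star})S S^\top+G=\mathbf{0},
\end{equation*}
where the block columns satisfy $G[\rho]\in\lambda_t^{i\rho}\partial\|\hat K[\rho]\|_F$. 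In particular $\|G[\rho]\|_F\le\lambda_t^{i\rho}$, and hence $\|G\|_F\le\sqrt{P_i}\max_\rho\lambda_t^{i\rho}$.

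The first step is to invert this relation. Since $S$ is symmetric positive definite,
\begin{equation*}
\hat K-K_t^{i\star}=-\tfrac12 G S^{-2},
\qquad
\|\hat K[\rho]-K_t^{i\star}[\rho]\|_F\le\|\hat K-K_t^{i\star}\|_F\le \frac{\sqrt{P_i}\max_\rho\lambda_t^{i\rho}}{2\,\sigma_{\min}(S)^2}.
\end{equation*}
We then use Weyl's inequality for positive semidefinite matrices, $\sigma_{\min}(S)\ge\sigma_{\min}(\Sigma_t^{i-})+\sigma_{\min}(V_t^i)$, which produces the squared factor in the theorem's bound.

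The second step is to lower bound $\|K_t^{i\star}[\rho]\|_F$. The block column is $K_t^{i\star}[\rho]=K_t^{i\star}J_\rho$, where $J_\rho$ is the column selector with orthonormal columns. Writing $K_t^{i\star}=\Sigma_t^{i-}S^{-1}$, both factors are nonsingular, which holds provided $\Sigma_t^{i-}\succ0$; otherwise the bound is vacuous. Therefore
\begin{equation*}
\|K_t^{i\star}J_\rho\|_F\ge\sigma_{\min}(K_t^{i\star})\ge \frac{\sigma_{\min}(\Sigma_t^{i-})}{\sigma_{\max}(S)}\ge\frac{\sigma_{\min}(\Sigma_t^{i-})}{\sigma_{\max}(\Sigma_t^{i-})+\sigma_{\max}(V_t^i)}.
\end{equation*}

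The third step combines the two bounds. The reverse triangle inequality gives $\|\hat K[\rho]\|_F\ge\|K_t^{i\star}[\rho]\|_F-\|\hat K[\rho]-K_t^{i\star}[\rho]\|_F$. The reset criterion in \eqref{eqn:reset_rule} therefore holds as soon as $\|\hat K[\rho]-K_t^{i\star}[\rho]\|_F\le(1-r_{\text{th}})\|K_t^{i\star}[\rho]\|_F$. Substituting the two bounds above, this is implied by
\begin{equation*}
\frac{\sqrt{P_i}\max_\rho\lambda_t^{i\rho}}{2[\sigma_{\min}(\Sigma_t^{i-})+\sigma_{\min}(V_t^i)]^2}\le(1-r_{\text{th}})\frac{\sigma_{\min}(\Sigma_t^{i-})}{\sigma_{\max}(\Sigma_t^{i-})+\sigma_{\max}(V_t^i)},
\end{equation*}
and rearranging gives exactly the stated condition, which then holds for every $\rho\in[P_i]$.

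The main obstacle is step one: the subgradient $G[\rho]$ is not determined when $\hat K[\rho]=\mathbf 0$. We handle this by relying only on the uniform bound $\|G[\rho]\|_F\le\lambda_t^{i\rho}$, which holds in every case, and by observing that the conclusion itself excludes $\hat K[\rho]=\mathbf 0$, since $\|K_t^{i\star}[\rho]\|_F>0$. We also have to check that the singular-value estimates apply to the symmetric positive definite matrices involved; the Weyl-type lower bound on $\sigma_{\min}(S)$ relies on both summands being positive semidefinite.
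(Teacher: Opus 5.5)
Your proof is correct. Its chain of estimates matches the paper's Case~1 essentially step for step:

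\begin{itemize}
\item invert the stationarity relation to get $\hat K-K_t^{i\star}=-\tfrac12 G S^{-2}$;
\item bound $\sigma_\text{min}(S)$ from below by Weyl's inequality;
\item bound $\|K_t^{i\star}[\rho]\|_F$ from below by $\sigma_\text{min}(K_t^{i\star})\ge\sigma_\text{min}(\Sigma_t^{i-})/(\sigma_\text{max}(\Sigma_t^{i-})+\sigma_\text{max}(V_t^i))$;
\item close with the reverse triangle inequality.
\end{itemize}

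Where you genuinely differ is in how zero blocks are handled. The paper splits into two cases. Case~1 assumes every $K_t^i[\rho]\neq\mathbf{0}$, so the penalty is differentiable with gradient $\lambda_t^{i\rho}K_t^i[\rho]/\|K_t^i[\rho]\|_F$. Case~2 is a contradiction argument: a zero block would require $\lambda_t^{i\rho}\ge\|2K_t^{i\star}S^2E_\rho\|_F\ge 2\sigma_\text{min}(\Sigma_t^{i-}S)$, which exceeds the hypothesized bound.

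You instead write one subgradient condition for all blocks at once and use only $\|G[\rho]\|_F\le\lambda_t^{i\rho}$. That bound holds whether or not $\hat K[\rho]=\mathbf{0}$. The reset inequality $\|\hat K[\rho]\|_F\ge r_{\text{th}}\|K_t^{i\star}[\rho]\|_F$ then follows directly, and the impossibility of a zero block comes for free.

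This buys more than brevity. The paper's Case~2 takes the block gradient of the least-squares term at $\hat K[\rho]=\mathbf{0}$ to be $-2K_t^{i\star}S^2E_\rho$, with $E_\rho$ playing the role of your $J_\rho$. That silently drops the term $2\hat K S^2 E_\rho=2\sum_{\rho'\neq\rho}\hat K[\rho']E_{\rho'}^\top S^2E_\rho$. This term is nonzero in general, since $S^2$ need not be block-diagonal with respect to the sensor partition. So that step is only valid as written when all the other blocks also vanish, whereas your uniform bound never needs it.

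Nothing is lost relative to the paper. Its Case~1 expression for the deviation $K_\lambda$ is just the special case $G[\rho]=\lambda_t^{i\rho}\hat K[\rho]/\|\hat K[\rho]\|_F$ of your formulation.
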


\begin{proof}
    \textbf{Case 1:} Suppose $K_t^i[\rho]\neq \textbf{0}, \forall\rho\in[P_i]$. Then, the first-order optimality condition for the problem \eqref{eqn:lasso_sparse_estimation} gives:
    \begin{equation*}
        \begin{aligned}
            &2(K_t^i-K_t^{i\star})(\Sigma_t^{i-}+V_t^i)^2=-\begin{bmatrix}
                \frac{\lambda_t^{i1}K_t^i[1]}{\|K_t^i[1]\|_F} & \cdots & \frac{\lambda_t^{iP_i}K_t^i[P_i]}{\|K_t^i[P_i]\|_F}
            \end{bmatrix}\\
            &\Rightarrow K_t^i=K_t^{i\star}+K_\lambda,
        \end{aligned}
    \end{equation*} 
    where $K_\lambda=-\frac{1}{2}\begin{bmatrix}
                \frac{\lambda_t^{i1}K_t^i[1]}{\|K_t^i[1]\|_F} & \cdots & \frac{\lambda_t^{iP_i}K_t^i[P_i]}{\|K_t^i[P_i]\|_F}
            \end{bmatrix}(\Sigma_t^{i-}+V_t^i)^{-2}$.
   Let $\Bar{\lambda}_t^i=\max_{\rho\in[P_i]}\lambda_t^{i\rho}$. Denoting the $\rho$-th block column of $K_\lambda$ as $K_\lambda[\rho]$, it follows that
    \begin{equation}
        \begin{aligned}
            \|K_\lambda[\rho]\|_F\leq\|K_\lambda\|_F&\leq 
            \frac{\sqrt{P_i}\Bar{\lambda}_t^i}{2[\sigma_\text{min}(\Sigma_t^{i-}+V_t^i)]^2}.\label{eqn:k_lambda_bound}
        \end{aligned}
    \end{equation}
    Since $K_t^i = K_t^{i\star} + K_\lambda$ holds block-column-wise, i.e., $K_t^i[\rho] = K_t^{i\star}[\rho] + K_\lambda[\rho]$, the reverse triangle inequality and $\sigma_\text{min}(\Sigma_t^{i-}+V_t^i)\geq\sigma_\text{min}(\Sigma_t^{i-})+\sigma_\text{min}(V_t^i)$ give
    \begin{align}\nonumber
            \|K_t^i[\rho]\|_F&\geq \|K_t^{i\star}[\rho]\|_F-\|K_\lambda[\rho]\|_F\\
            &\geq \|K_t^{i\star}[\rho]\|_F-\frac{\sqrt{P_i}\Bar{\lambda}_t^i}{2[\sigma_\text{min}(\Sigma_t^{i-})+\sigma_\text{min}(V_t^i)]^2}.\label{eq:ident}
     \end{align}
    Next, note that for $K_t^i[\rho]$ to reset to $K_t^{i\star}[\rho]$, the rule \eqref{eqn:reset_rule}  requires $\|K_t^i[\rho]\|\geq r_{\text{th}}\|K_t^{i\star}[\rho]\|_F$. Combining this rule with \eqref{eq:ident}, a sufficient condition for $K_t^i[\rho]$ to reset to $K_t^{i\star}[\rho]$ is given by
    \begin{multline}\label{eq:fin1}
        \|K_t^{i\star}[\rho]\|_F-\frac{\sqrt{P_i}\Bar{\lambda}_t^i}{2[\sigma_\text{min}(\Sigma_t^{i-})+\sigma_\text{min}(V_t^i)]^2}\geq r_{\text{th}}\|K_t^{i\star}[\rho]\|_F \\       
        \Leftrightarrow\ \  (1-r_{\text{th}})\|K_t^{i\star}[\rho]\|_F\ge\frac{\sqrt{P_i}\Bar{\lambda}_t^i}{2[\sigma_\text{min}(\Sigma_t^{i-})+\sigma_\text{min}(V_t^i)]^2}.
    \end{multline}
    Denote $K_t^{i\star}[\rho]=K_t^{i\star}E_\rho$, where $E_\rho$ is a matrix that selects the $\rho$-th block column of $K_t^{i\star}$. Then, it follows that
    \begin{multline}\label{eq:fin2}
        \|K_t^{i\star}[\rho]\|_F\geq\|K_t^{i\star}[\rho]\|_2\geq\sigma_\text{min}(K_t^{i\star})\|E_\rho\|_2\\= \sigma_\text{min}(K_t^{i\star})\geq\frac{\sigma_\text{min}(\Sigma_t^{i-})}{\sigma_\text{max}(\Sigma_t^{i-})+\sigma_\text{max}(V_t^i)}.
    \end{multline}
    Combining \eqref{eq:fin1}-\eqref{eq:fin2}, it follows that a sufficient condition for $K_t^{i}[\rho]$ to reset to $K_t^{i\star}[\rho]$ is
    \begin{equation}
        \Bar{\lambda}_t^i\leq \frac{2(1-r_{\text{th}})}{\sqrt{P_i}}\cdot\frac{\sigma_\text{min}(\Sigma_t^{i-})[\sigma_\text{min}(\Sigma_t^{i-})+\sigma_\text{min}(V_t^i)]^2}{\sigma_\text{max}(\Sigma_t^{i-})+\sigma_\text{max}(V_t^i)}\label{eqn:sufficient}
    \end{equation}
    \textbf{Case 2:} Suppose $K_t^i[\rho]=\textbf{0}$ is the optimal solution for \eqref{eqn:lasso_sparse_estimation} for some $\rho$. Then, using the subgradient optimality condition in \eqref{eqn:lasso_sparse_estimation} with respect to $K_t^i[\rho]$, we obtain
\begin{equation}
    \begin{aligned}
        \lambda_t^{i\rho}&\geq\left\|2K_t^{i\star}(\Sigma_t^{i-}+V_t^i)^2E_\rho\right\|_F\geq 2\sigma_\text{min}\!\left(\Sigma_t^{i-}(\Sigma_t^{i-}+V_t^i)\right)\\
        &\geq 2\sigma_\text{min}(\Sigma_t^{i-})\left[\sigma_\text{min}(\Sigma_t^{i-})+\sigma_\text{min}(V_t^i)\right]\\
        &> \frac{2(1-r_{\text{th}})}{\sqrt{P_i}}\cdot\frac{\sigma_\text{min}(\Sigma_t^{i-})\left[\sigma_\text{min}(\Sigma_t^{i-})+\sigma_\text{min}(V_t^i)\right]^2}{\sigma_\text{max}(\Sigma_t^{i-})+\sigma_\text{max}(V_t^i)},
    \end{aligned}
    \label{eqn:k=0}
\end{equation}
    where the last inequality follows from $\sqrt{P_i}\geq 1$, $1-r_{\text{th}}<1$, and $\frac{\sigma_\text{min}(\Sigma_t^{i-})+\sigma_\text{min}(V_t^i)}{\sigma_\text{max}(\Sigma_t^{i-})+\sigma_\text{max}(V_t^i)}\leq 1$. This contradicts \eqref{eqn:sufficient}. Hence, $K_t^i[\rho]=\textbf{0}$ cannot be optimal under \eqref{eqn:sufficient}, hence \eqref{eqn:sufficient} is sufficient to trigger a reset of $K_t^i[\rho]$ to $K_t^{i\star}[\rho]$.
\end{proof}

%% file: algorithms/lqg_game_play.tex
\begin{algorithm}[!h]
    \small
    \caption{LQG Game with Sparse Distributed Estimation}
    \begin{algorithmic}[1]
        \renewcommand{\algorithmicrequire}{\textbf{Input:}}
        \renewcommand{\algorithmicensure}{\textbf{Output:}}
        \REQUIRE 
        \textbf{System Parameters:} ${\Sigma_{\textbf{x},1}^-, (A_t,B_t,W_t)_{t=1}^T, (\{C_t^i\}_{i=1}^N,V_t)_{t=1}^{T+1}}$,\\
        \ \ \quad \textbf{Deterministic Feedback Nash Strategy:} $(\Gamma_t,\alpha_t)_{t=1}^{T}$,\\
        \ \  \quad \textbf{Regularization level} $(\{\lambda_t^{i\rho}\}_{\rho=1}^{P_i})_{i=1,t=1}^{N,T+1}$.\\
        \ENSURE Sparse Individual Estimation Gain $\{K_t^i\}_{i=1,t=1}^{N,T+1}$.
        \FOR{$t=1:T$}
        \FOR{$i=1:N$}
            \STATE $K_t^i\leftarrow$ \eqref{eqn:lasso_sparse_estimation}\quad \texttt{// Group lasso optimization} \\
            \STATE $\textbf{y}_t^i\leftarrow$ \eqref{eqn:ind_obs}\quad \texttt{// Individual noisy observation}
            \STATE $\hat{\textbf{x}}_t^i\leftarrow$ \eqref{eqn:est_x_update}\quad \texttt{// Distributed \textbf{sparse} estimation}
            \STATE $\hat{\textbf{u}}_t^i\leftarrow$ \eqref{eqn:estimated_control}\quad \texttt{// Feedback control estimation}
        \ENDFOR
        \STATE $\textbf{u}_t\leftarrow$ \eqref{eqn:actual_control}\quad \texttt{// Actual feedback joint control}
        \STATE $\textbf{x}_{t+1}\leftarrow$ \eqref{eqn:x_update}~ \texttt{// System update with process noise}
        \ENDFOR
    \end{algorithmic}
    \label{alg:LQG_game_play}
\end{algorithm}

%% file: experiment.tex
\label{sec:experiment}
We validate our proposed sparse LQG games framework in a three-robot formation game, and investigate robots' trajectory performance under different regularization designs.

\begin{figure*}[!t]
    \centering
    \includegraphics[width=0.82\linewidth]{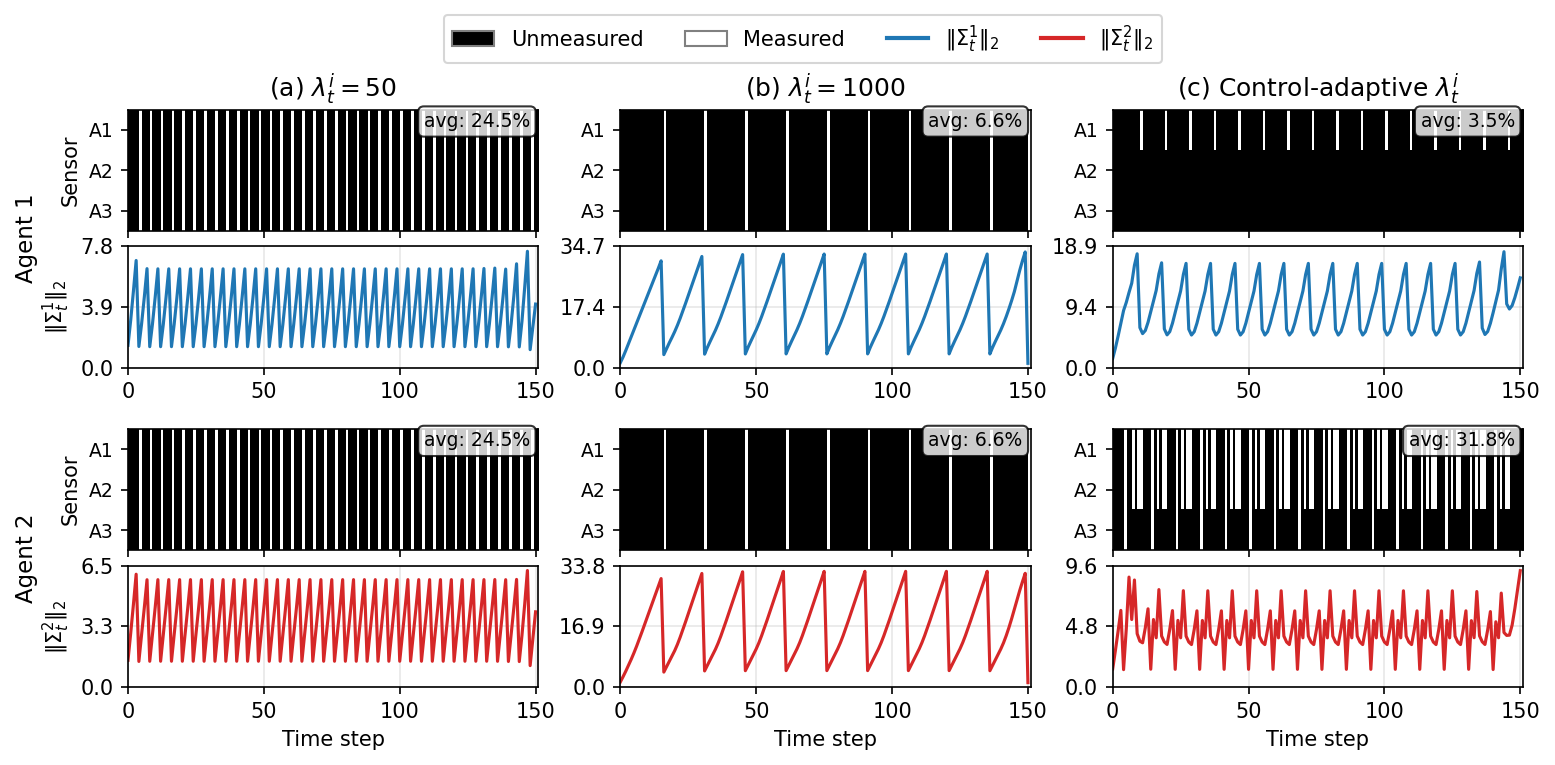}
    \caption{The sensor usage and individual estimation covariance norm evolution of each robot (Row 1: R1, Row 2: R2, we omit R3 as it is symmetric to R2) in the three-robot formation game under constant regularization levels (a) $\lambda_t^i=50$, (b) $\lambda_t^i=1000$ and (c) game-theoretic control-adaptive $\lambda_t^i$.}
    \label{fig:sensor_and_norm}
    \vspace{-4ex}
\end{figure*}

\begin{figure}[!t]
    \centering
    \includegraphics[width=0.99\linewidth]{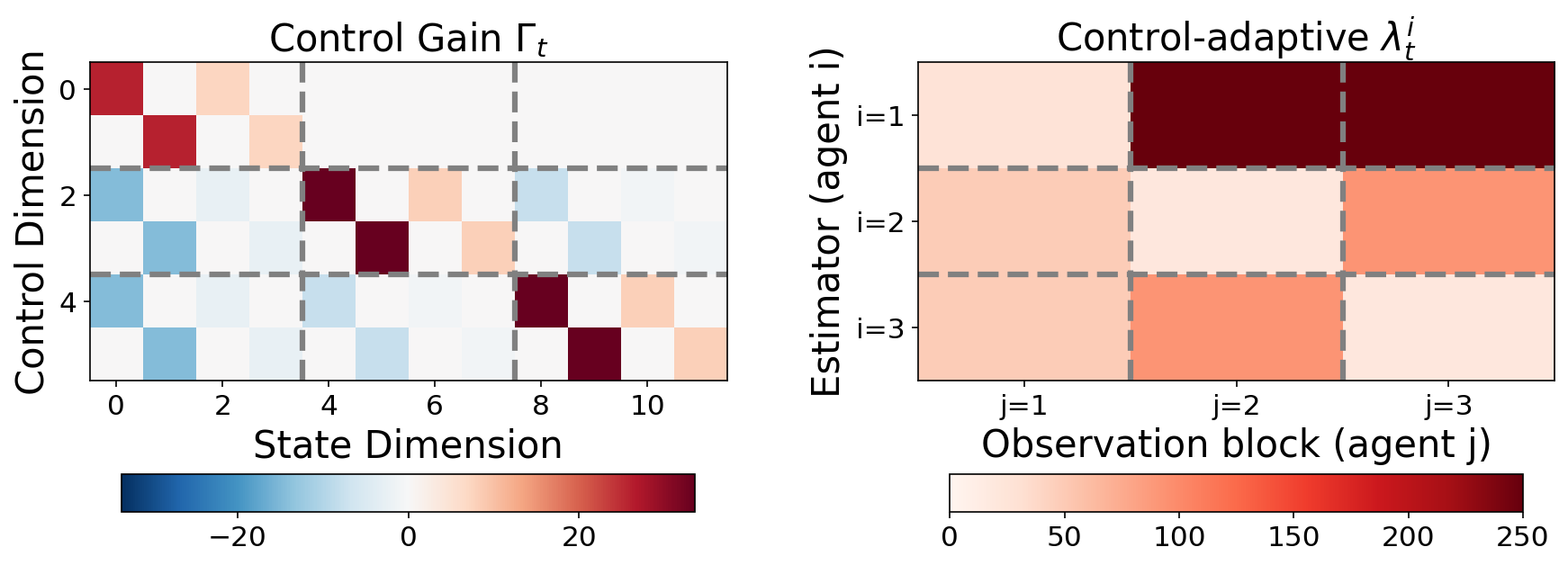}
    \caption{Game-theoretic control-adaptive regularization with respect to feedback Nash control gain $\Gamma_t^i[j]$ in \eqref{eqn:adaptive_lambda} for a three-robot formation game.}
    \label{fig:adaptive_lambda}
    \vspace{-4ex}
\end{figure}

\subsection{Simulation Setup}
The formation game comprises a leading robot (R1) that aims to track a static reference path $p_{t,\text{ref}}^1$, and following robots (R2 and R3) that aim to maintain a reference displacement $d_{\text{ref}}^{ij}$ between them. Each robot follows double-integrator dynamics discretized with time-step $\Delta t$, i.e.,
\begin{equation*}
    A_t^i=\begin{bmatrix}
        I_2 & \Delta t\cdot I_2 \\ \textbf{0} & I_2
    \end{bmatrix},\  
    B_t^i=\begin{bmatrix}
        \frac{\Delta_t^2}{2}\cdot I_2 \\ \Delta t\cdot I_2
    \end{bmatrix},
\end{equation*}
with state $x_t^i=[p_{t,x}^i,p_{t,y}^i,v_{t,x}^i,v_{t,y}^i]^\top$ encoding 2-D position and velocity and control $u_t^i=[a_{t,x}^i,a_{t,y}^i]^\top$ encoding 2-D acceleration. We formulate each robot's objective as follows:
\begin{equation}
    \begin{aligned}
        J^1&=\sum_{t=1}^T\mathbb{E}\left[\|p_t^1-p_{t,\text{ref}}^1\|_2^2+\|u_t^1\|_2^2\right]\notag+\mathbb{E}\left[\|p_{T+1}^1-p_{T+1,\text{ref}}^1\|_2^2\right],\\
        J^i&=\sum_{t=1}^T\sum_{j\neq i}\mathbb{E}\left[\|p_t^i-p_t^j-d_{\text{ref}}^{ij}\|_2^2+\|u_t^i\|_2^2\right]\notag\\
        &+\sum_{j\neq i}\mathbb{E}\left[\|p_{T+1}^i-p_{T+1}^j-d_{\text{ref}}^{ij}\|_2^2\right], i\in\{2,3\}, j\in\{1,2,3\},
    \end{aligned}
\end{equation}
with $T=150, \Delta t=\SI{0.05}{s}$, the reference displacement between robots $d_{\text{ref}}^{21}=[-2;-2], d_{\text{ref}}^{31}=[2;-2], d_{\text{ref}}^{23}=-d_{\text{ref}}^{32}=[4;0]$, and the reference path for R1: $p_{t,\text{ref}}^1=\big(30\cos{(k\cdot\Delta t)}, 30\sin{(3k\cdot\Delta t)}\big)$ for $k\in[T]$. We also set  $
        \Sigma_{\textbf{x},1}^-=I_3\otimes\text{blockdiag}(0.25^2I_2, 1.2^2I_2),$ $
        W_t^i=\text{blockdiag}(\textbf{0}_{2\times 2}, 1.2^2I_2),$ $
        V_t^i=I_3\otimes\text{blockdiag}(0.25^2I_2, 1.2^2I_2)$
and $C_t^i=I_{12}, \forall i\in[3]$ being identical across robots. For the control-adaptive regularization design in \eqref{eqn:adaptive_lambda}, we set $L_1=1000$ and $L_2=250$, and the resulting regularization is illustrated in Fig. \ref{fig:adaptive_lambda}.

\begin{figure*}[!t]
    \centering
    \includegraphics[width=0.74\linewidth]{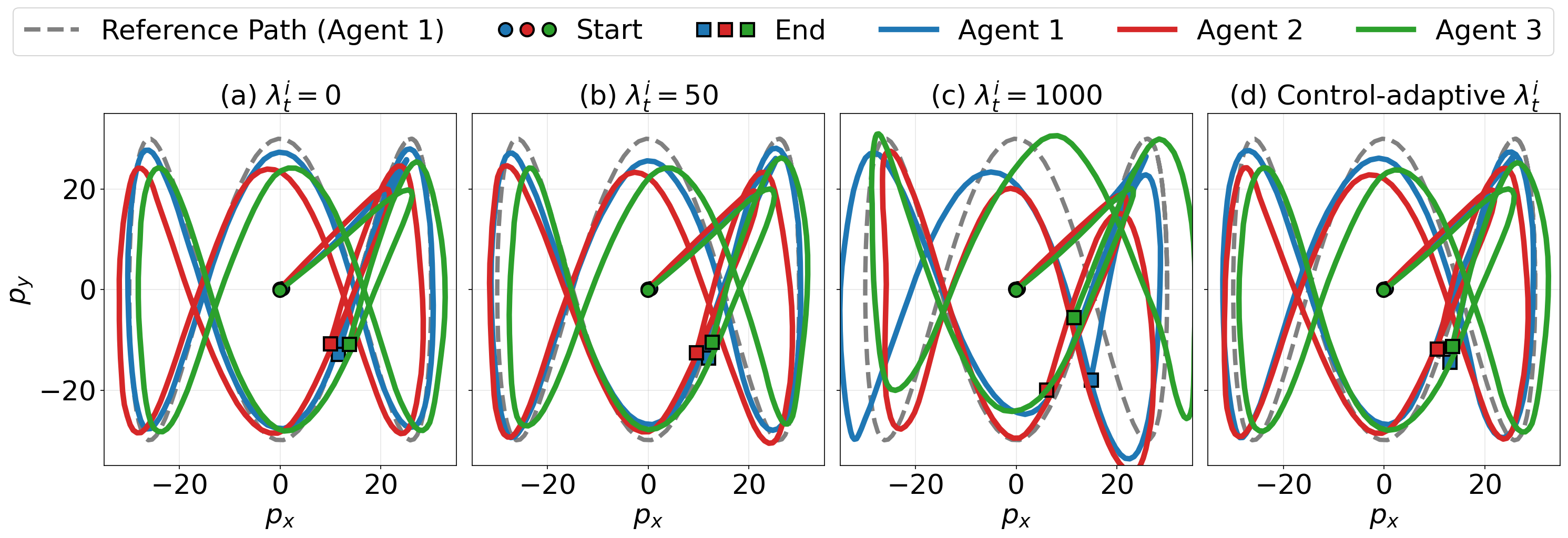}
    \caption{Trajectory performance for the formation game with regularization levels (a) $\lambda_t^i=0$, (b) $\lambda_t^i=50$, (c) $\lambda_t^i=1000$, and (d) control-adaptive $\lambda_t^i$.}
    \label{fig:trajectory_comparison}
    \vspace{-4ex}
\end{figure*}

\subsection{Result Analysis}
We evaluate our distributed sparse estimation with game-theoretic control-adaptive regularization and compare it with static regularization levels \(\lambda_t^i=50\) and \(1000\). We observe the following behaviors: \romannumeral1) Fig.~\ref{fig:sensor_and_norm} shows that sensor usage decreases as the regularization level increases, since \(L_{\text{Sparsity}}\) becomes more dominant in \eqref{eqn:lasso_sparse_estimation}. \romannumeral2) Fig.~\ref{fig:trajectory_comparison} further shows that stronger regularization reduces sensing usage but degrades control performance because estimation error accumulates more quickly, highlighting the trade-off between control performance and estimation sparsity. \romannumeral3) The estimation error increases when measurements are skipped and drops sharply when dense measurements are taken, suggesting that the covariance remains bounded under sparse estimation.

Compared with static regularization, the proposed control-adaptive regularization achieves a better balance between sensing resource allocation and control performance. As shown in Fig.~\ref{fig:sensor_and_norm}, R1 saves sensing resources by prioritizing self-measurement, while R2 and R3 still frequently measure all robots. The sensor usage pattern also shows that R2 and R3 measure R1 more often than they measure each other, reflecting R1's leadership role in the game. Fig.~\ref{fig:trajectory_comparison} shows that this adaptive strategy allows R1 to track the reference path well with limited sensing, while R2 and R3 follow accordingly. Overall, compared with static regularization, control-adaptive regularization leads to a more effective allocation of inter-robot measurements in games with structured coupling among agents' objectives.

%% file: conclusion.tex
\label{sec:conclusion}

We studied LQG games in which multiple agents estimate the joint system state using sparse individual observations. To promote sparsity in each agent’s estimator, we proposed a group lasso formulation that zeros out groups of estimation gains, yielding a convex optimization problem.  Future work includes an extension of the proposed approach to nonlinear dynamics and non-quadratic agent objectives, via iterative LQ game approaches such as \cite{fridovich2020efficient,wang2020game,schwarting2021stochastic}.